\newtheorem{Theorem}{Theorem}
\newtheorem{Proposition}{Proposition}
\newtheorem{Definition}{Definition}
\newtheorem{lemma}{Lemma}
\newenvironment{proof}{\textit{Proof\,:}} { $\square$}
\begin{document}
\title{Analytical lower bounds for the size of elementary trapping sets of variable-regular  LDPC codes with any girth and irregular ones with  girth 8}
%-----------------------------------------------------------main affiliation
%\author{
%\IEEEauthorblockN{Hassan~Khodaiemehr}
%\IEEEauthorblockA{Department of Mathematics and Computer \\ Science,
%Amirkabir University of Technology\\
% E-mail: h.khodaiemehr@aut.ac.ir}
% \and
%\IEEEauthorblockN{Mohammad-Reza~Sadeghi}
%\IEEEauthorblockA{Department of Mathematics and Computer \\ Science,
%Amirkabir University of Technology\\
%Email: msadeghi@aut.ac.ir}
%\and
%\IEEEauthorblockN{Amin~Sakzad}
%\IEEEauthorblockA{Department of ECSE, Monash\\
%University, Victoria, Australia\\
%E-mail: amin.sakzad@monash.edu}
%}
%--------------------------------------------------------------
\author{ Farzane Amirzade and Mohammad-Reza~Sadeghi\\
\thanks{%
Manuscript received May ??, ????; revised November ??, ????.}
\thanks{  M.-R. Sadeghi is with the Department of Mathematics and Computer Science, Amirkabir University of Technology and F. Amirzade is with the Department of Mathematics, Shahrood University of Technology

(e-mail:  msadeghi@aut.ac.ir, famirzade@gmail.com).}
 \thanks{%
 Digital Object Identifier ????/TCOMM.?????}}

%\and
%\IEEEauthorblockN{Emanuele~Viterbo}
%\IEEEauthorblockA{Dept of ECSE, Monash University, Australia\\
%E-mail: emanuele.viterbo@monash.edu}}

%\author{Amin~Sakzad, and Mohammad-Reza~Sadeghi\thanks{A. Sakzad is with the ECSE Department at Monash University, Melbourne, Australia. M. R. Sadeghi is with the Faculty of Mathematics and Computer Science, Amirkabir University of Technology, Tehran, Iran.
%E-mails: amin.sakzad@monash.edu and msadeghi@aut.ac.ir.}}
\maketitle
\begin{abstract}
In this paper we give lower bounds on the size of $(a,b)$ elementary trapping sets (ETSs) belonging to variable-regular LDPC codes with any girth, $g$, and irregular ones with girth 8, where $a$ is the size, $b$ is the number of degree-one check nodes and satisfy the inequality $\frac{b}{a}<1$. Our proposed lower bounds are analytical, rather than exhaustive search-based, and based on graph theories. The numerical results in the literarture for $g=6,8$ for variable-regular LDPC codes match our results. Some of our investigations are independent of the girth and rely on the variables $a$, $b$ and $\gamma$, the column weight value, only. We prove that for an ETS belonging to a variable-regular LDPC code with girth 8 we have $a\geq2\gamma-1$ and $b\geq\gamma$. We demonstrate that these lower bounds are tight, making use of them we provide a method to achieve the minimum size of ETSs belonging to irregular LDPC codes with girth 8 specially those whose column weight values are a subset of $\{2,3,4,5,6\}$. Moreover,  we show for variable-regular LDPC codes with girth 10,  $a\geq(\gamma-1)^2+1$. And for  $\gamma=3,4$ we obtain $a\geq7$ and $a\geq12$, respectively. Finally, for variable-regular LDPC codes with girths $g=2(2k+1)$ and $g=2(2k+2)$ we obtain  $a\geq(\gamma-2)^k+1$ and $a\geq2(\gamma-2)^k+1$, respectively. 

  \end{abstract}
% IEEEtran.cls defaults to using nonbold math in the Abstract.
% This preserves the distinction between vectors and scalars. However,
% if the journal you are submitting to favors bold math in the abstract,
% then you can use LaTeX's standard command \boldmath at the very start
% of the abstract to achieve this. Many IEEE journals frown on math
% in the abstract anyway.
% Note that keywords are not normally used for peerreview papers.
\begin{IEEEkeywords}
LDPC codes, girth, Tanner graph, Trapping set.
\end{IEEEkeywords}

% For peer review papers, you can put extra information on the cover
% page as needed:
% \ifCLASSOPTIONpeerreview
% \begin{center} \bfseries EDICS Category: 3-BBND \end{center}
% \fi
%
% For peerreview papers, this IEEEtran command inserts a page break and
% creates the second title. It will be ignored for other modes.
\IEEEpeerreviewmaketitle
\section{Introduction}
\IEEEPARstart{O}  ne of the phenomena that influences significantly  the performance of  binary low-density parity-check codes (or simply LDPC codes) is known as $trapping$ $sets$. An $(a,b)$ trapping set of size $a$ is an induced subgraph of the Tanner graph on $a$ variable nodes and $b$ check nodes of odd degrees. Among all trapping sets, the most impressive ones are those with check nodes of degrees one or two. This category is so-called elementary trapping sets (or simply ETSs). In addition, according to the literature, ETSs that cause high decoding failure rate and are the main culprit of high error floor are those which  satisfy the inequality $\frac{b}{a}<1$. 

Hereafter, the symbol $\gamma$ and $\lambda$ stand for the column weight and the row weight of variable-regular LDPC codes, respectively. In  $\cite{2011}$ it is proved that a binary $(\gamma,\lambda)$-regular LDPC code whose Tanner graph is 4-cycle free contains no $(a,b)$ trapping set of size $a\leq \gamma$, where $\frac{b}{a}<1$. In $\cite{2014}$ Karimi et al. present a class of ETSs based on short cycles in the Tanner graph of variable-regular LDPC codes. Any member of a class such as $S$ is a sequence of ETSs starting from a short cycle which grows by one variable node at a time to generate $S$. Such obtained sequence is called a layered superset (LSS) of a short cycle. For  column weight values of $3,4,5,6$, girths $6,8$ and $a,b\leq10$ all classes of ETSs are presented. In $\cite{2015}$, LSSs of some graphical structures other than short cycles are demonstrated, by Hashemi et al., to find all $(a,b)$ elementary trapping sets. In $\cite{2016}$, they propose three expansion techniques to obtain all ETSs, referred to as leafless ETSs (LETSs), in which each variable node is connected to at least two even degree check nodes. This new characterization has some advantages compared to their counterparts. For example, it covers all the LETSs with  $a\leq a_{max}$ and $b\leq b_{max}$, for any value of $a_{max}$ and $b_{max}$. Moreover, short cycles enumerated have less lengths compared to LSS-based search. Furthermore, in $\cite{2010}$ by assuming $a\leq8$ and $\frac{b}{a}<1$, a characterization of $(a,b)$ trapping sets of $(3,\lambda)$-regular LDPC codes from Steiner triple systems is studied. Moreover, in $\cite{Vasic}$, a method to construct $(3,\lambda)$-regular LDPC code whose Tanner graph has girth 8 and contains a minimum number of small trapping sets is provided.  

In this paper, we  provide analytically lower bounds on the size of ETSs of a variable-regular LDPC code. Some of our results are independent of the girth of the Tanner graph. For example, we prove that if $\frac{b}{a}<1$ then a variable-regular LDPC code contains no $(a,b)$ ETS of size $a\leq\gamma$. Some others are provided directly according to the girth. We demonstrate that the Tanner graph of a variable-regular LDPC code with girth 8 contains no $(a,b)$ ETS of size $a\leq2\gamma-2$ and the minimum number of unsatisfied check nodes in ETSs is $\gamma$. We present a construction for ETSs belonging to variable-regular LDPC codes with girth 8  which shows the obtained lower bounds for $a$ and $b$ are tight for all values of $\gamma$.  These lower bounds on the size and the number of degree-one check nodes of ETSs provide us with a chance to present a method to achieve the lower bounds for the size of ETSs belonging to irregular LDPC codes. We apply this method on irregular LDPC codes  whose column weight values are a subset of $\{2,3,4,5,6\}$. 

Moreover, we show that the minimum size of $(a,b)$ ETSs related to a variable-regular LDPC code with girth 10 and $\gamma=3,4$ are 7 and 12, respectively. And we prove that, generally, variable-regular LDPC codes with girth 10 contain no ETS of size  $a\leq(\gamma-1)^2$. Finally, we generalize our results for   variable-regular LDPC codes with any girth and column weight values, as follows. Variable-regular LDPC codes with girths $g=2(2k+1)$ and $g=2(2k+2)$ contain no $(a,b)$ ETSs of sizes  $a\leq(\gamma-2)^k$ and $a\leq2(\gamma-2)^k$, respectively. We believe that the main contribution of this article is that if all types of $(a,b)$ ETSs of a Tanner graph with the property of $\frac{b}{a}<1$ are under consideration, the achieved lower bound can be the minimum value of the parameter $a$ as the initial input of search algorithms. 

The rest of the paper is organized as follows. Section II presents some basic notations, definitions and graph theories which are our principle tools to prove our results. In section III, we provide lower  bounds for the size of ETSs which are independent of the girth. Section IV presents the lower  bounds for the size of ETSs belonging to variable-regular and irregular LDPC codes with girth 8 as well as tables to illustrate the influence of lower bounds on investigating the existence and non-existence of some ETSs. Lower bounds for the size of ETSs of variable-regular LDPC codes with girth 10 are investigated in Section V. In section VI we generalize our results for variable-regulare LDPC codes with any girth.  In the last section we summarize our results.  

\section{Preliminaries}\label{}
One of the most important representations of codes is a Tanner graph. A Tanner graph is a bipartite graph in which the set of variable nodes (VNs) forms one of the vertex sets, which is denoted by $V$, and the set of check nodes (CNs) forms the other vertex set, which is denoted by $C$. The degree of a node $v$, either variable or check node, is denoted by $d(v)$. The minimum degree of nodes is denoted by $\delta$. The set of vertices connected to a vertex $v$ forms a neighbor set of $v$ and is denoted by $N(v)$. A variable-regular LDPC code is corresponding to a Tanner graph in which for all $v\in V$ we have $d(v)=\gamma$. Since a bipartite graph has no odd cycle, any cycle is represented by alternating sequence of check nodes and variable nodes. The length of the shortest cycle is called girth and denoted by $g$.

Take an induced subgraph of the Tanner graph on a subset $S$ of $V$. The subgraph contains some check nodes of odd degrees and some check nodes of even degrees referred to as  unsatisfied check nodes and satisfied check nodes, respectively. If $|S|=a$ and the number of unsatisfied check nodes is $b$ then the induced subgraph provides an $(a,b)$ trapping set of size $a$. An $(a,b)$ trapping set is called elementary if all check nodes are of degrees one or two. As a result, all unsatisfied check nodes in an elementary trapping set (or ETS) are of degree one. In this paper we concentrate on all ETSs whose parameters satisfy the inequality $\frac{b}{a}<1$ because of their significant influence on the error floor region.

In an $(a,b)$ ETS, by removing all unsatisfied check nodes and replacing every satisfied check node with an edge we obtain a graph with $a$ vertices which is called a normal graph. In this paper instead of searching all ETSs, we consider their corresponding normal graphs. 

In this section, we also provide some graph theories and definitions  which are our principle tools to prove our results. 
\begin{Definition}
Given a graph $G=(V,E)$, where $|V|$ and $|E|$ are the number of vertices and edges, respectively. The degree sum formula is as follows 
\begin{center}
$\sum_{v\in V}d(v)=2|E|$.
\end{center}
\end{Definition}
\begin{Definition}
A complete bipartite graph is a type of bipartite graph in which every vertex of the first vertex set is connected to every vertex of the second vertex set. A complete bipartite graph with $m$ vertices in one of the vertex sets and $n$ vertices in the other is denoted by $K_{m,n}$.   
\end{Definition}
\begin{Definition}
A complete  graph is a graph in which every pair of distinct vertices is connected by a unique edge. A complete  graph on $n$ vertices is denoted by $K_{n}$.   
\end{Definition}
\begin{Definition}
A triangle-free graph is a graph in which no three vertices form a cycle of length three. In other words, a triangle-free graph is a $K_3$-free graph.  
\end{Definition}

The existence of a $2k$-cycle in the ETS is equivalent to the existence of a cycle of length $k$ in its correspondent normal graph. For example, if a sequence of $v_0,c_0,v_1,c_1,v_2,c_2$ is a 6-cycle of an ETS, where $v_i\in V$ and $c_i\in C$, then by replacing any check node with an edge we obtain a cycle of length three in the normal graph whose vertices are $v_1,v_2,v_3$ . For example, in Fig. 1 $(a)$ the trapping set contains two 6-cycles and its corresponding normal graph has two triangles. Moreover, any 4-cycle in the trapping set is equivalent to a multiple edge in its corresponding normal graph. As an example, if a sequence of $v_0,c_0,v_1,c_1$ is a 4-cycle of a trapping set then by replacing any check node with an edge we have a multiple edge $(v_1,v_2)$.

In a 4-cycle free Tanner graph, the normal graph of each ETS is free of multiple edges which is called a simple graph. And in a Tanner graph with girth at least 8, the normal graph of each ETS is a simple and triangle-free graph. For example, in Fig. 1 $(b)$ the ETS contains no 6-cycle and its corresponding normal graph is triangle-free. 
\begin{center}
\begin{figure}
\centering
\includegraphics[scale=.3]{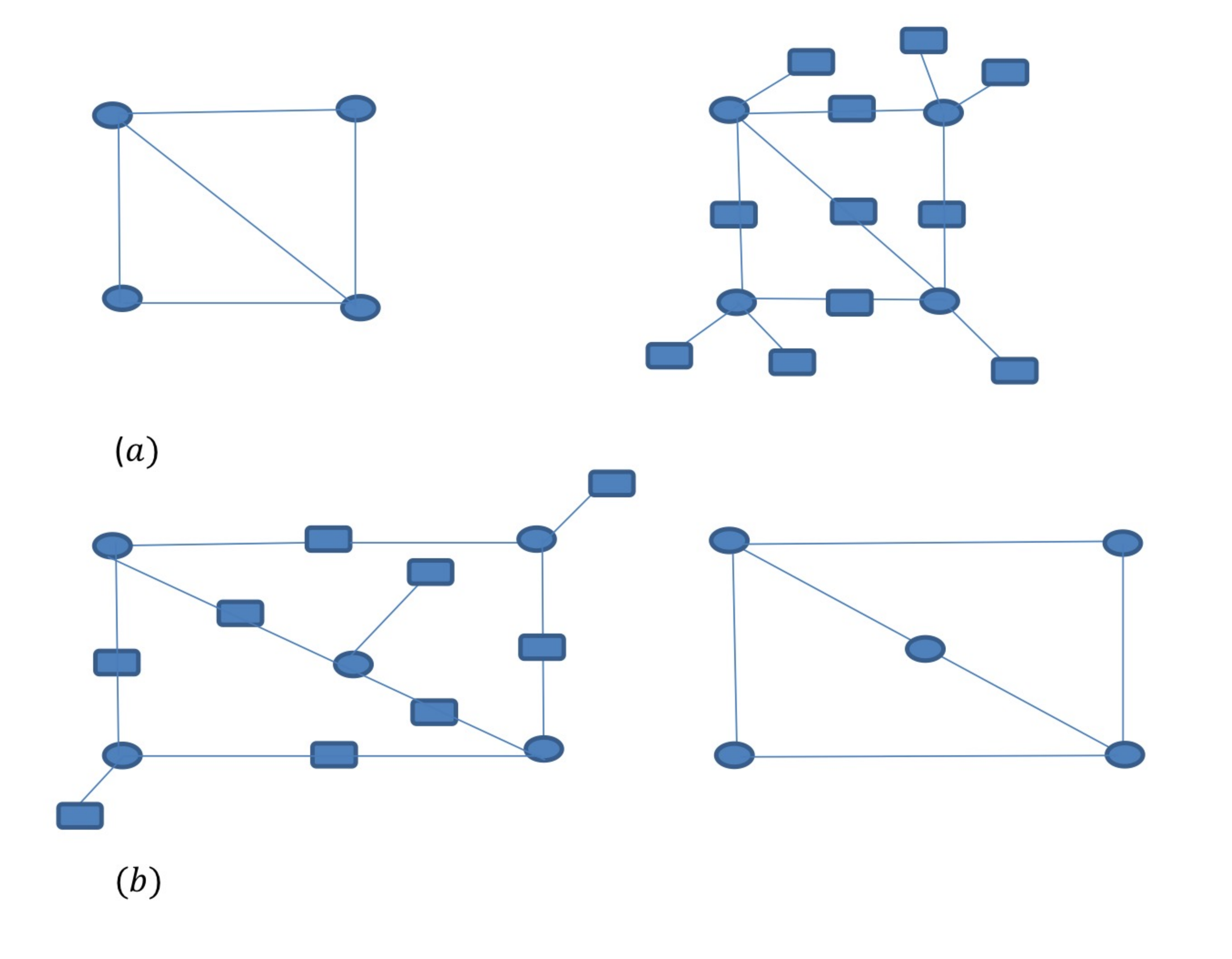}
\caption{$(a)$ is a normal graph and its corresponding (4,6) ETS with $\gamma=4$ and $(b)$ is a (5,3) ETS with $\gamma=4$ and its  corresponding normal graph}
\end{figure}
\end{center}
\section{Lower bounds for the size of ETSs which are independent of the girth}\label{}

In this section all the results are analytically obtained and are compared with those based on exhaustive search algorithms. First, we demonstrate some results regarding to the variables $a$, $b$ and $\gamma$ in an $(a,b)$ ETS. 

\begin{lemma} 
An $(a,b)$ ETS and its corresponding normal graph of a variable-regular Tanner graph, for which $\frac{b}{a}<1$, hold the following conditions.
	
\noindent $(i)$ The normal graph has at least one vertex of degree $\gamma$.
 
\noindent $(ii)$ There is no $(a,b)$ ETS, where $a\leq \gamma$.

\noindent $(iii)$ If the normal graph contains $|E|$ edges then $b=a\gamma-2|E|$. 

\noindent $(iv)$ If $a$ is an even number then $b$ is also an even number.

\noindent $(v)$ If $a$ is an odd number then parameters $\gamma$ and $b$ both are even or odd.
\end{lemma}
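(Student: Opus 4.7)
The plan is to establish (iii) first as a double-counting identity, deduce (iv) and (v) from it by parity, use (iii) again to prove (i) via an averaging argument, and finally invoke (i) to obtain (ii). For (iii), I would count edge-endpoints in the bipartite subgraph underlying the ETS in two different ways. On the VN side, each of the $a$ variable nodes contributes exactly $\gamma$ to the sum, since the code is column-regular and every Tanner-graph neighbour of a VN in $S$ lies in the induced subgraph. On the CN side, each unsatisfied check node contributes $1$ and each satisfied check node contributes $2$; since satisfied check nodes are exactly the objects replaced by the edges of the normal graph, their number equals $|E|$. Equating the two counts gives $a\gamma=2|E|+b$, which is (iii). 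Parts (iv) and (v) then follow from $b=a\gamma-2|E|$ by reading the identity modulo $2$: if $a$ is even then $b\equiv 0\pmod 2$, while if $a$ is odd then $b\equiv\gamma\pmod 2$, so $b$ and $\gamma$ share parity.

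For (i), I would combine (iii) with the hypothesis $b/a<1$ to lower-bound the average degree of the normal graph. The degree sum equals $2|E|=a\gamma-b$, and $b<a$ yields $2|E|>a(\gamma-1)$, so the mean degree in the normal graph strictly exceeds $\gamma-1$. On the other hand, each vertex of the normal graph has degree at most $\gamma$, because the underlying VN has only $\gamma$ check-node neighbours in the Tanner graph, and each of those contributes at most one edge of the normal graph. An integer-valued degree sequence with mean above $\gamma-1$ and maximum at most $\gamma$ cannot consist entirely of values $\le\gamma-1$, so at least one vertex must attain the value $\gamma$, as required.

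For (ii), I would take the degree-$\gamma$ vertex $v$ produced by (i) and invoke the simplicity of the normal graph (guaranteed by the $4$-cycle-free assumption recorded in Section~II) to conclude that $v$ has $\gamma$ pairwise distinct neighbours; since $v$ itself lies in the ETS, this forces $a\geq\gamma+1$, contradicting $a\leq\gamma$. The main obstacle is part (i), which supplies the qualitative content; the others are essentially bookkeeping. A secondary subtlety, worth flagging, is that in the presence of $4$-cycles the normal graph may carry multi-edges, and then a degree-$\gamma$ vertex could a priori have fewer than $\gamma$ distinct neighbours, so the passage from (i) to (ii) really does rely on the simplicity of the normal graph that Section~II records.
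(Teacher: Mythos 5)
Your proof is correct and follows essentially the same route as the paper's: everything rests on the degree-sum identity $b=a\gamma-2|E|$, your averaging argument for (i) is just the contrapositive form of the paper's computation $b=\sum_{i}(\gamma-d(v_i))\geq a$, and (ii) is the same ``a degree-$\gamma$ vertex forces $a\geq\gamma+1$'' observation. The one point worth keeping is your explicit remark that (ii) relies on the normal graph being simple (i.e.\ the Tanner graph being $4$-cycle free), a hypothesis the paper's proof uses implicitly when it asserts that every vertex of a graph on $a$ vertices has degree at most $a-1$.
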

\begin{proof}
$(i)$ Since our desire $(a,b)$ ETSs are those hold the inequality $\frac{b}{a}<1$,  the normal graph has at least a  vertex of degree $\gamma$. Because, if for each vertex, $v$, we have $d(v)<\gamma$ then $\gamma-d(v)\geq1$. Therefore $b=\sum_{i=1}^{a}(\gamma-d(v_i))\geq a$ and so $\frac{b}{a}\geq1$. 

 $(ii)$  It is clear that in any graph with $a$ vertices, the degree of each vertex is at most $a-1$, since the normal graph has at least a  vertex of degree $\gamma$ we have $\gamma\leq a-1$.
	
$(iii)$ According to the degree sum formula we have: $b=\sum_{i=1}^{a}(\gamma-d(v_i))=a\gamma-\sum_{i=1}^{a}d(v_i)=a\gamma-2|E|$.
	
 $(iv)$ Since $a\gamma=2|E|+b$, if $a$ is an even number then the left side of the equality is even so $b$ is also an even number.
	
 $(v)$  If $a$ is an odd number then the parameter $\gamma$ determines $b$ as an odd or even number. In this case if $\gamma$ is odd then the left side of the equality, $a\gamma=2|E|+b$, is also an odd number, as a result the right side must be an odd number which proves $b$ is odd. If $\gamma$ is even then the left side of the equality is also an even number, as a result the right side must be an even number which proves $b$ is even.	
\end{proof}

An immediate consequence of the above Lemma is shown in Table I for $\gamma=3,4,5,6$  which demonstrates the non-existence of some of $(a,b)$ ETSs of a variable-regular LDPC code. The table indicates that in order to consider the existence of some $(a,b)$ ETSs there is no need to apply the proposed exhaustive search algorithms in the literature. Moreover, it expresses that the non-existence of some of $(a,b)$ ETSs is independent of the girth of the Tanner graph, while in \cite{2014} they are obtained by exhaustive search algorithms for Tanner graphs with girth 6. Additionally, the sign of "-" in the table illustrates the non-existence of corresponding $(a,b)$ ETS which is proved by Lemma 1 $(ii)$.

\begin{table}[h]
\begin{center}
\begin{tabular}{|c|c|c|c|c|}
\hline
    & $\gamma=3$ & $\gamma=4$ & $\gamma=5$ & $\gamma=6$\\ 
\hline    
$a=4$ & $b=1,3$ & -  & - & -\\
\hline
$a=5$ & $b=0,2,4$ & $b=1,3$ & - & -\\
\hline
$a=6$ & $b=1,3,5$ & $b=1,3,5$ & $b=1,3,5$ & -\\
\hline
$a=7$ & $b=0,2,4,6$ & $b=1,3,5$ & $b=0,2,4,6$ & $b=1,3,5$\\
\hline
$a=8$ & $b=1,3,5,7$ & $b=1,3,5,7$ & $b=1,3,5,7$ & $b=1,3,5,7$\\
\hline
$a=9$ & $b=0,2,4,6,8$ & $b=1,3,5,7$ & $b=0,2,4,6,8$ & $b=1,3,5,7$ \\
\hline
\end{tabular}
\caption{Non-existence of $(a,b)$ ETSs of variable-regular LDPC codes}
\end{center}
\end{table}

\section{Lower bounds on the size of ETSs for both variable-regular and irregular LDPC codes with girth 8}\label{}
In this section and the following ones we provide lower bounds on the size of elementary trapping sets according to the girth of the Tanner graph. Although in the section VI we obtain a lower bound for the size of ETSs belonging to Tanner graphs with any girth $g\geq8$, the lower bounds for the size of ETSs belonging to Tanner graphs with girths 8 and 10 are investigated separately to make  the  lower bounds tighter.

This section contains two parts. In the first part, we consider the size  and the number of degree-one check nodes of ETSs belonging to variable-regular LDPC codes whose Tanner graphs have girth 8. In the second part, we take benefit from the results in the first part and present a method to determine a lower bound for the size of ETSs belonging to irregular LDPC codes whose Tanner graphs have girth 8. We provide numerical results for irregular LDPC codes whose variable nodes are a subset of $\{2,3,4,5,6\}$. 
\subsection{ Variable-regular LDPC codes with girth 8}
In \cite{2014} there are some numerical results, which are search-based, about ETSs belonging to Tanner graphs with girth 8 and $\gamma=3,4,5$. In order to consider the lower bound for the size of ETSs belonging to LDPC codes with girth 8 and any value of $\gamma$ we provide the well-known Turan's Theorem about all $K_{r+1}$-free graphs, as follows.
\begin{Theorem}~\label{lemOrder}
	$\cite{Turan}$ Let $G$ be any graph with $n$ vertices, such that $G$ is $K_{r+1}$-free. Then the number of edges in $G$ is at most $\frac{r-1}{r}\times\frac{n^2}{2}=(1-\frac{1}{r})\times\frac{n^2}{2}$.
\end{Theorem}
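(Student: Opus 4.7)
The plan is to prove the Turán bound by induction on the number $n$ of vertices, keeping the clique parameter $r+1$ fixed. I would begin with the base case $n \leq r$, where the trivial estimate $|E(G)| \leq \binom{n}{2}$ already suffices; the required inequality $\binom{n}{2} \leq (1-1/r)n^2/2$ reduces after clearing denominators to $n \leq r$, which is precisely the hypothesis of the base case.

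For the inductive step with $n > r$, I would first ensure that $G$ contains a copy of $K_r$. Since the statement is an upper bound on $|E(G)|$, it suffices to establish it after augmenting $G$ with as many edges as possible while preserving $K_{r+1}$-freeness; in such a maximally saturated graph a $K_r$ must appear whenever $n \geq r$. Fixing such a copy $K$ of $K_r$, the key observation is that each of the $n-r$ vertices outside $K$ can be adjacent to at most $r-1$ vertices of $K$, for otherwise that vertex together with $K$ would form a forbidden $K_{r+1}$. Decomposing the edges of $G$ into edges within $K$, edges between $K$ and its complement, and edges among the remaining $n-r$ vertices (to which the inductive hypothesis applies, since the induced subgraph is still $K_{r+1}$-free), I would obtain
\begin{equation*}
|E(G)| \leq \binom{r}{2} + (r-1)(n-r) + \frac{(r-1)(n-r)^2}{2r}.
\end{equation*}
Factoring $(r-1)/(2r)$ out of the right-hand side and recognizing the remaining bracket as the perfect square $(r+(n-r))^2 = n^2$ immediately gives $|E(G)| \leq (r-1)n^2/(2r)$, closing the induction.

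The main obstacle I expect is the clean handling of the case in which $G$ itself contains no $K_r$, since the inductive decomposition requires a $K_r$ to anchor the edge count. The saturation trick above is the simplest remedy, but a more structural alternative would be Zykov symmetrization: repeatedly replace one of two non-adjacent vertices of unequal degree by a twin of the higher-degree vertex. This operation never decreases $|E(G)|$ and preserves $K_{r+1}$-freeness, and it terminates at a complete multipartite graph with at most $r$ parts; for such graphs the edge count is easily maximized by the balanced partition (the Turán graph) and a direct computation yields the bound $(1-1/r)n^2/2$. Either route disposes of the difficulty, and I would choose whichever integrates more smoothly with the rest of the paper's notation.
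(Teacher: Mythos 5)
The paper does not prove this statement at all: it is quoted verbatim from the cited reference (Aigner and Ziegler, \emph{Proofs from THE BOOK}), so there is no in-paper argument to compare yours against. Judged on its own, your proof is correct, and it is in fact essentially the first of the proofs given in that very reference: induction on $n$ with $r$ fixed, anchored on a copy of $K_r$. The base case $n\leq r$ checks out, since $\binom{n}{2}\leq\frac{r-1}{r}\cdot\frac{n^2}{2}$ is equivalent to $n\leq r$. Your saturation step is the one place requiring care, and it holds: if $n>r$ and $G$ is edge-maximal $K_{r+1}$-free, then $G$ is not complete, and adding any missing edge $uv$ creates a $K_{r+1}$ through $uv$, whose other $r-1$ vertices together with $u$ form a $K_r$ already present in $G$. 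The three-part edge count then closes cleanly, since
\begin{equation*}
\binom{r}{2}+(r-1)(n-r)+\frac{(r-1)(n-r)^2}{2r}
=\frac{r-1}{2r}\left(r^2+2r(n-r)+(n-r)^2\right)=\frac{r-1}{2r}\,n^2 .
\end{equation*}
The Zykov symmetrization route you sketch as an alternative is also a standard valid proof and would additionally identify the extremal graphs as balanced complete multipartite (Tur\'an) graphs, which the inductive argument does not give directly; for the purposes of this paper, which only uses the $r=2$ (triangle-free) edge bound $n^2/4$ in Proposition 1, either version is more than sufficient.
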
 
As mentioned in Section II, the normal graph corresponding to an ETS belonging to LDPC codes with girth 8 is a $K_3$-free graph. The following proposition for triangle-free graphs, which is a consequence of the above theorem, is our main tool to obtain the lower bounds for $a$ and $b$.

\begin{Proposition}
Let $G$ be a triangle-free graph with $n$ vertices, then the number of edges in $G$ is at most $\frac{n^2}{4}$.
\end{Proposition}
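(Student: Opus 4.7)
The plan is to obtain this proposition as an immediate specialization of Turán's theorem (stated as Theorem~\ref{lemOrder}) to the case $r=2$. By Definition~4, a triangle-free graph is precisely a $K_3$-free graph, and $K_3 = K_{r+1}$ exactly when $r=2$, so the hypothesis of Turán's theorem is satisfied with this choice of $r$.

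First, I would note that the proposition's hypothesis (triangle-free on $n$ vertices) matches Turán's hypothesis ($K_{r+1}$-free on $n$ vertices) with $r=2$. Then I would simply substitute $r=2$ into the Turán bound $\left(1 - \tfrac{1}{r}\right)\cdot \tfrac{n^2}{2}$, obtaining
\begin{equation*}
|E(G)| \;\leq\; \left(1 - \tfrac{1}{2}\right)\cdot \tfrac{n^2}{2} \;=\; \tfrac{1}{2}\cdot \tfrac{n^2}{2} \;=\; \tfrac{n^2}{4},
\end{equation*}
which is the claimed bound. No additional combinatorial argument is needed once Turán's theorem is in hand.

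There is essentially no obstacle here: the whole content of the proposition is to record the $r=2$ case of Turán's theorem in a form convenient for later use (namely, to bound the edge count of the normal graph of an ETS whose Tanner graph has girth at least~$8$, which was observed in Section II to be triangle-free). If one wished to make the write-up self-contained without citing Turán, one could instead give a short direct proof by induction on $n$, or by the standard argument that in a triangle-free graph the neighborhoods of adjacent vertices are disjoint, giving $d(u)+d(v)\leq n$ for every edge $uv$ and hence, after summing and applying the Cauchy--Schwarz inequality to $\sum_v d(v)^2$, the bound $|E|\leq n^2/4$. But since Turán's theorem has just been quoted, the one-line substitution above is the cleanest route.
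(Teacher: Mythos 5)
Your proof is correct and matches the paper's approach exactly: the paper presents this proposition as an immediate consequence of Turán's theorem, which is precisely the $r=2$ specialization you carry out. The substitution $\left(1-\tfrac{1}{2}\right)\cdot\tfrac{n^2}{2}=\tfrac{n^2}{4}$ is all that is needed.
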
 
\begin{Theorem}~\label{lemOrder}
In a variable-regular LDPC code, whose Tanner graph has girth  at least 8, there is no $(a,b)$ ETS of size less than $2\gamma-1$, where $a,b$  satisfy the inequality $\frac{b}{a}<1$.
\end{Theorem}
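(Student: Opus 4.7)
The plan is to combine the triangle-free structure of the normal graph (forced by girth $\geq 8$) with the edge bound from Tur\'an's theorem (Proposition~1), and then play that against the identity from Lemma~1(iii) together with the hypothesis $b/a<1$. Essentially the entire proof reduces to a one-line inequality, so I expect no real obstacle, only some care in how the strict inequality is handled.

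First I would set up the normal graph $G$ of the ETS on $a$ vertices. Because the Tanner graph has girth at least $8$, any $4$-cycle or $6$-cycle is excluded, and the correspondence recalled in Section~II (a $2k$-cycle in the ETS becomes a $k$-cycle in the normal graph) tells us that $G$ is simple and triangle-free, i.e. $K_3$-free. Hence Proposition~1 applies and gives the edge bound
\begin{equation*}
|E|\leq \frac{a^2}{4}.
\end{equation*}

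Next I would invoke Lemma~1(iii), which says $b=a\gamma-2|E|$. The ETS hypothesis $b/a<1$ becomes $a\gamma - 2|E| < a$, that is,
\begin{equation*}
2|E| > a(\gamma-1).
\end{equation*}
Combining this lower bound on $|E|$ with the Tur\'an upper bound gives
\begin{equation*}
a(\gamma-1) < 2|E| \leq \frac{a^2}{2},
\end{equation*}
so after dividing by $a>0$ one obtains $2(\gamma-1)<a$, i.e. $a\geq 2\gamma-1$.

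This closes the argument: no ETS with $b/a<1$ can have $a\leq 2\gamma-2$, because Lemma~1(i) forces at least one vertex of degree $\gamma$ (so $G$ is nonempty of the right type), Proposition~1 caps $|E|$, and the ETS constraint forces $|E|$ to exceed $a(\gamma-1)/2$, which is impossible when $a\leq 2(\gamma-1)$. If anything subtle arises it is only in insisting on the strict inequality $b<a$; since $b$ is an integer, the case $a=2\gamma-2$ is ruled out without ambiguity, and the statement $a\geq 2\gamma-1$ follows.
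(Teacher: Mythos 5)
Your proof is correct and follows essentially the same route as the paper: Lemma~1(iii), the triangle-freeness of the normal graph, and the Tur\'an edge bound of Proposition~1, combined with $b/a<1$. If anything, your version is slightly cleaner, since you chain the inequalities $a(\gamma-1)<2|E|\leq a^2/2$ directly rather than assuming the normal graph attains the maximum number of edges, as the paper does.
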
 
\begin{proof}
According to Lemma 1 $(iii)$ we have $b=a\gamma-2|E|$, where $|E|$ is the number of edges in the normal graph corresponding to an $(a,b)$ ETS. Since the Tanner graph has girth at least 8, it is 6-cycle free. Therefore as mentioned in the previous section, the normal graph is triangle-free and so according to proposition 1 the maximum number of edges is $\frac{a^2}{4}$. Assume that the normal graph has the maximum number of edges. In this case $b=a\gamma-2|E|=a\gamma-2(\frac{a^2}{4})$. So the number of unsatisfied check nodes is $a\gamma-\frac{a^2}{2}$. In order to consider dominant $(a,b)$ ETSs we focus on those  with the property of $\frac{b}{a}<1$. So we have $\frac{b}{a}=\frac{a\gamma-\frac{a^2}{2}}{a}=\gamma-\frac{a}{2}<1$, which results in the following inequality $2\gamma-2<a$. As a consequence, there is no $(a,b)$ ETS of size less than or equal to $2\gamma-2$ with the property of $\frac{b}{a}<1$.
\end{proof}
\begin{Theorem}
For each $(a,b)$ ETS of a variable-regular LDPC code, whose Tanner graph has girth at least 8, the number of odd degree check nodes is at least $a\gamma-\frac{a^2}{2}$.
\end{Theorem}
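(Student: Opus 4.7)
The plan is to combine Lemma~1$(iii)$ with the triangle-free Tur\'an bound (Proposition~1) almost verbatim; this theorem is really a direct corollary of what is already in hand, so the "proof" is a one-step chain of inequalities once the inputs are assembled.

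First I would invoke Lemma~1$(iii)$ to write $b = a\gamma - 2|E|$, where $|E|$ denotes the edge count of the normal graph associated with the $(a,b)$ ETS. This identity is unconditional on the girth; it only uses the degree-sum formula together with the fact that each of the $a$ variable nodes has degree $\gamma$ in the Tanner graph and that satisfied check nodes correspond to edges in the normal graph.

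Next I would appeal to the structural observation already recorded in Section~II: if the Tanner graph has girth at least $8$, then it is $6$-cycle free, and consequently the normal graph of any ETS is triangle-free. With this in place, Proposition~1 supplies the bound
\begin{equation*}
|E| \;\leq\; \frac{a^{2}}{4}.
\end{equation*}
Substituting this into the expression for $b$ gives
\begin{equation*}
b \;=\; a\gamma - 2|E| \;\geq\; a\gamma - 2\cdot\frac{a^{2}}{4} \;=\; a\gamma - \frac{a^{2}}{2},
\end{equation*}
which is exactly the desired lower bound on the number of odd-degree (unsatisfied) check nodes.

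There is essentially no obstacle to overcome: the only subtle point is making sure the "triangle-free implies $|E|\le a^{2}/4$" step really applies, i.e.\ that the correspondence between $6$-cycles in the Tanner graph and triangles in the normal graph is correctly invoked; but this has already been argued in the preliminaries. So the proof reduces to stringing together Lemma~1$(iii)$, the girth-to-triangle-free translation, and Proposition~1, in that order.
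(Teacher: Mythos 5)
Your proposal is correct and follows exactly the same route as the paper's own proof: Lemma~1$(iii)$ gives $b = a\gamma - 2|E|$, the girth-$8$ condition makes the normal graph triangle-free, and Proposition~1 bounds $|E|$ by $a^2/4$, yielding the claim. No differences worth noting.
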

\begin{proof}
As considered in Lemma 1 $(iii)$, if the number of edges in a normal graph is $|E|$ then $a,b,\gamma$ and $|E|$ hold the equality $b=a\gamma-2|E|$. If the Tanner graph has girth at least 8 then $2|E|\leq \frac{a^2}{2}$. Therefore $b=a\gamma-2|E|\geq a\gamma-\frac{a^2}{2}$.  
\end{proof}

For example, if an ETS contains the minimum size,  $a=2\gamma-1$, then $b=\gamma$. In the following we explain how a $(2\gamma-1,\gamma)$ ETS belonging to a variable-regular LDPC code is constructed. This structure proves that the obtained lower bounds for $a$ and $b$ are tight. In this case  the corresponding normal graph has $a=2\gamma-1$ vertices and so has the maximum number of edges, which is $|E|=[\frac{(2\gamma-1)^2}{4}]=\gamma^2-\gamma$. Such a normal graph can be a complete bipartite graph $K_{\gamma-1,\gamma}$. Therefore, normal graphs corresponding to ETSs where $\gamma=3,\gamma=4,\gamma=5$ and $\gamma=6$ are $K_{2,3},K_{3,4},K_{4,5}$ and $K_{5,6}$, respectively. The corresponding ETSs are shown in Fig. 2.
\begin{center}
\begin{figure}[h!]
\centering
\includegraphics[scale=.3]{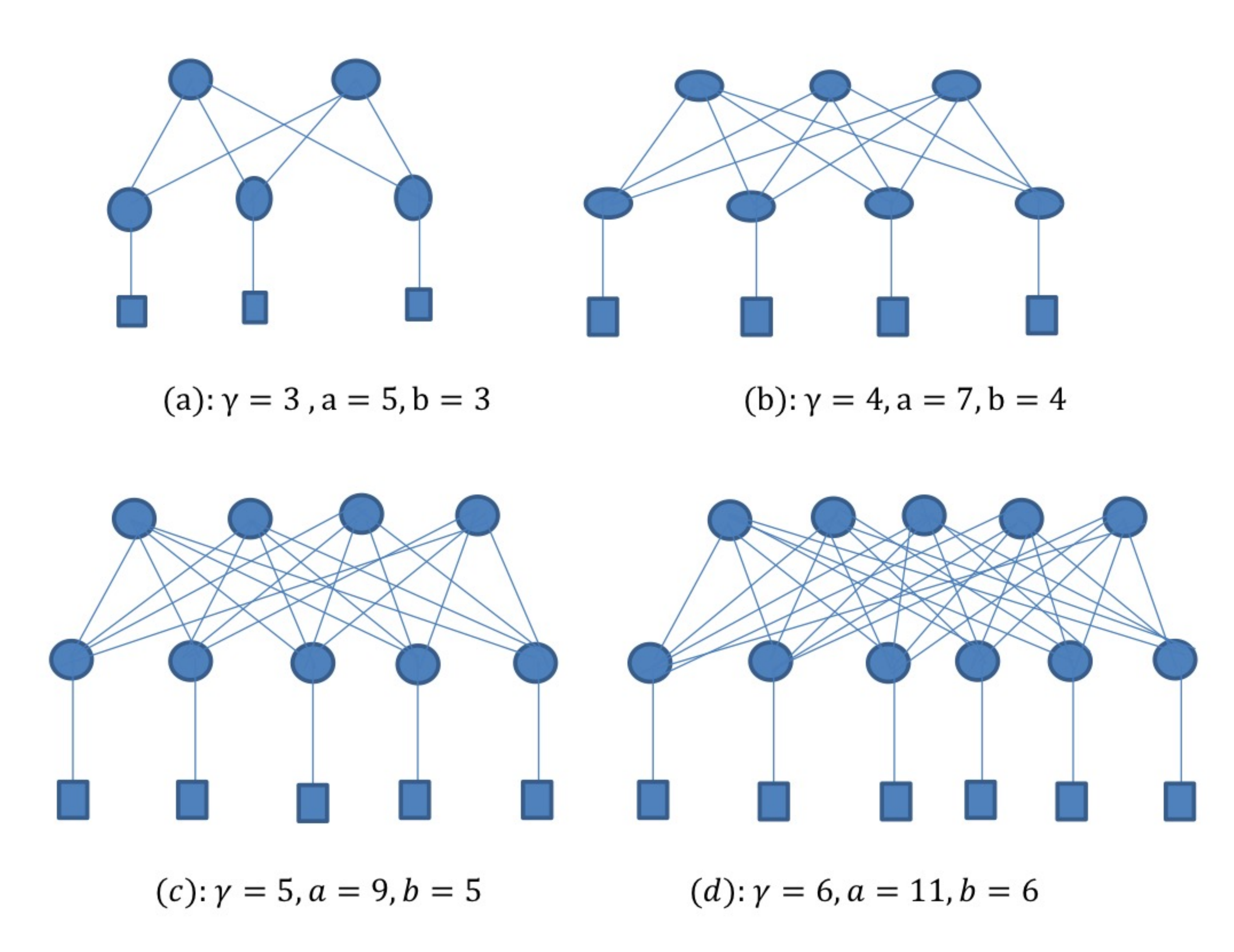}\\
\caption{ETSs for $\gamma=3,\gamma=4,\gamma=5$ and $\gamma=6$}
\end{figure}
\end{center}

The lower bounds obtained for $a$ and $b$ in Theorem 2 and Theorem 3 as well as Lemma 1 provide us with useful information about the existence of ETSs without using any search-based method. As an example, for variable-regular LDPC codes whose Tanner graphs have $\gamma=6$ and $g=8$ we have $a\geq11$ and $b\geq 6a-\frac{a^2}{2}$. So for $a=11$  the minimum number of unsatisfied check nodes is 6. And finally by applying Lemma 1 we conclude that there are three ETSs to investigate, which are (11,6) ETS, (11,8) ETS and (11,10) ETS. 

For $\gamma=3,4,5$ we provide Table II to demonstrate the existence of some $(a,b)$ ETSs with the property of $\frac{b}{a}<1$ belonging to variable-regular LDPC codes whose Tanner graph have $g=8$. The table also expresses that the non-existence of some of $(a,b)$ ETSs is independent of the girth of the Tanner graph and relies on $a,b$ and $\gamma$. Note that the lower bound we provided analytically is used for any given value of $\gamma$.
  
\begin{table}[h]
\begin{center}
\begin{tabular}{|c|c|c|c|c|}
\hline
    & $\gamma=3$ & $\gamma=4$ & $\gamma=5$\\ 
\hline    
$a=4$ &  &   & \\
\hline
$a=5$ & $b=3$ &  & \\
\hline
$a=6$ & $b=0,2,4$ &  & \\
\hline
$a=7$ & $b=1,3,5$ & $b=4,6$ & \\
\hline
$a=8$ & $b=0,2,4,6$ & $b=0,2,4,6$ & \\
\hline
$a=9$ & $b=1,3,5,7$ & $b=0,2,4,6,8$ & $b=5,7$\\
\hline
\end{tabular}
\caption{Existence of $(a,b)$ ETSs of variable-regular LDPC codes whose Tanner graph have $g=8$}
\end{center}
\end{table}
\subsection{Irregular LDPC codes with girth 8}
There are a number of irregular LDPC codes with different variable node degrees. Hence, considering all types of ETSs belonging to irregular LDPC codes seems difficult and rather impossible. Although there is a search-based algorithm in \cite{2012} to find ETSs belonging to a given irregular LDPC code, it is not comprehensive. And, as a whole, there are not much works done to determinate all types of ETSs in this catagory. However, the results proposed in the part A provide us with a chance to find out a method to obtain lower bounds for the size of ETSs belonging to irregular LDPC codes whose Tanner graphs have girth 8. We apply the method on irregular LDPC codes whose variable node degrees belong to the set $\{2,3,4,5,6\}$. 

In this case our desire $(a,b)$ ETSs are also those satisfying in the inequality $\frac{b}{a}<1$. From Theorem 2 we conclude that if the minimum degree of variable nodes is $\delta$ then we have $a\geq2\delta-1$. In order to consider the lower bound for the size of ETSs belonging to an irregular LDPC code we first take into account an  ETS belonging to a variable-regular LDPC code with $\gamma=\delta$, $a=2\delta-1$ and $b=\delta$ like those shown in Fig. 2. Depending on the difference between two parameters $a$ and $b$ there are two cases to consider.
\begin{itemize}
\item If $a-b=1$ then we have to increase the number of variable nodes in the ETSs belonging to a variable-regular LDPC code with $\gamma=\delta$ at least by one. Because, adding a check node of degree 1 to an ETS belonging to a variable-regular LDPC code with $\gamma=\delta$ results in an $(a,a)$ ETS belonging to an irregular LDPC code whose column weight values contain $\delta$ and $\delta+1$. But for this case we have $\frac{b}{a}=1$ which contradicts our desire.  

For example, suppose the lower bound for the size of ETSs belonging to irregular LDPC codes whose column weights are  in a set $\{2,3\}$ is under consideration. For this case we have $\delta=2$ and according to Theorem 2 we have $a\geq3$. We first take a $(3,2)$ ETS belonging to a variable-regular LDPC code with $\gamma=2$. Since $a-b=1$ inorder to obtain an  ETS belonging to irregular LDPC codes whose column weights are  in a set $\{2,3\}$ we have to increase the number of variable nodes of the mentioned $(3,2)$ ETS at least by one. Because by adding a check node of degree 1 to the mentioned $(3,2)$ ETS we obtain a $(3,3)$ ETS for which we have $\frac{b}{a}=1$. In this case $a=4$ is considered as the lower bound for the size of ETSs belonging to an irregular LDPC code with $d(v)\in\{2,3\}$. In Fig. 3 we present all types of ETSs with the minimum size 4 belonging to four types of irregular LDPC codes. They are a (4,1) ETS for $d(v)\in\{2,3\}$, a (4,2) ETS for $d(v)\in\{2,4\}$, two (4,3) ETSs for $d(v)\in\{2,5\}$  and  $d(v)\in\{2,3,4\}$.
\begin{center}
\begin{figure}[h!]
\centering
\includegraphics[scale=.3]{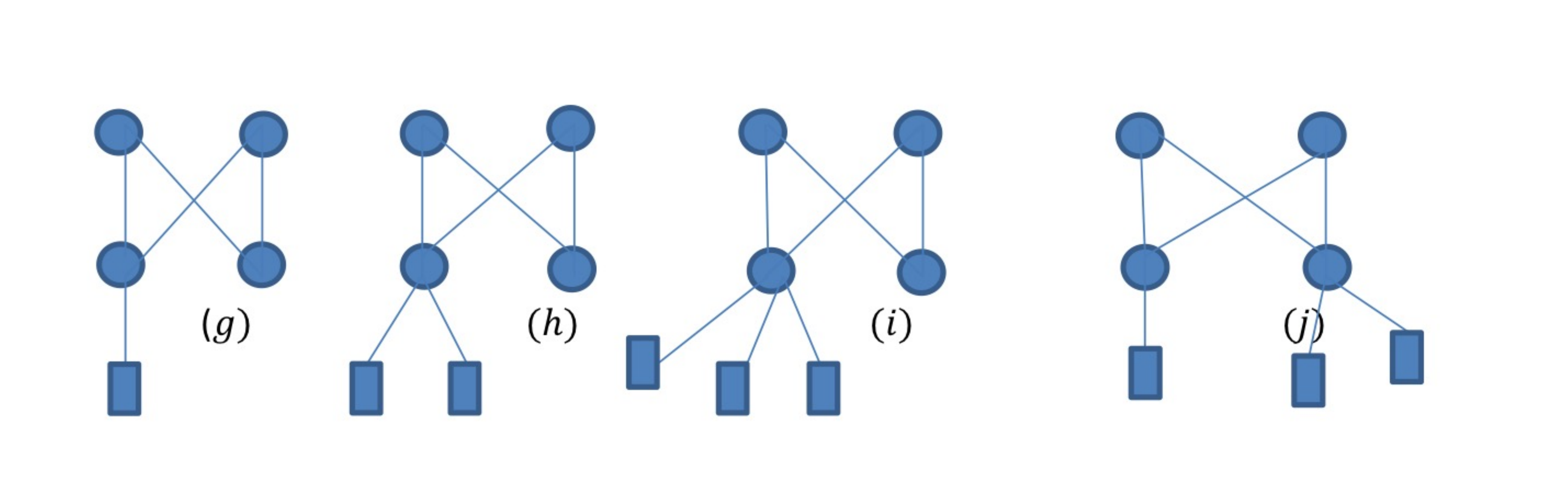}\\
\caption{$(g)$ is a (4,1) ETS with $d(v)\in\{2,3\}$, $(h)$ is a (4,2) ETS with $d(v)\in\{2,4\}$, $(i)$ is a (4,3) ETS with $d(v)\in\{2,5\}$ and  $(j)$ is a (4,3) ETS with $d(v)\in\{2,3,4\}$}
\end{figure}
\end{center}
\item If $a-b=i\geq2$ then the lower bound for the size of the ETS obtained for variable-regular LDPC codes with the column weight $\delta$ can be accounted for irregular LDPC codes with the minimum degree $\delta$, as well. On the condition that the number of degree-one check nodes which have to be added to a $(2\delta-1,\delta)$ ETS to achieve an ETS belonging to irregular LDPC codes with minimum degree $\delta$ is less than $i$, otherwise in order to obtain an ETS belonging to an irregular LDPC code from an ETS belonging to a variable-regular LDPC code with $\gamma=\delta$, the number of variable nodes  has to be raised at least by one.

For example, suppose the lower bounds for the size of ETSs belonging to irregular LDPC codes whose column weights are  in a set $\{4,5\}$, $\{4,6\}$ or $\{4,7\}$ are under consideration.  For all three cases we have $\delta=4$. According to Theorem 2, for a variable-regular LDPC code whose column weight is 4 we have $a\geq7$. We first take a (7,4) ETS belonging to a variable-regular LDPC code with $\gamma=4$. Since $a-b=3$, the lower bound for the size of ETSs belonging to irregular LDPC codes whose column weights are  in a set $\{4,5\}$ or $\{4,6\}$ is also 7, because in these cases the number of degree-one check nodes added to the (7,4) ETS is at most 2. But if column weights are in the set $\{4,7\}$ then the size of ETSs is at least 8. To clarify this construction we present Fig. 4. 
\begin{center}
\begin{figure}[h!]
\centering
\includegraphics[scale=.3]{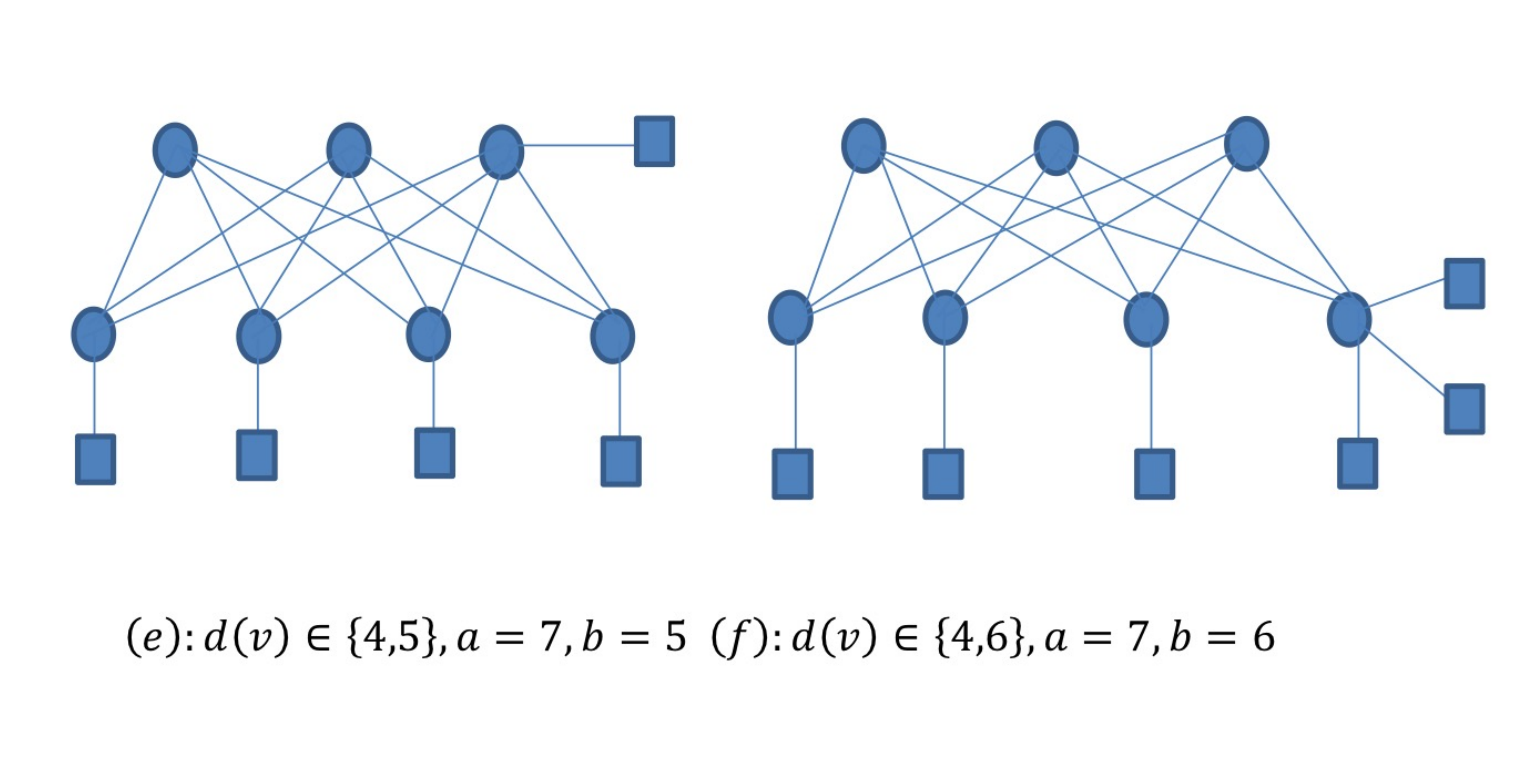}\\
\caption{$(e)$ and $(f)$ are a (7,5) ETS and  a (7,6) ETS obtained from (7,4) ETS shown in Fig. 2 $(c)$}
\end{figure}
\end{center} 
\end{itemize}
Generally, we obtain the exact lower bound for the desire irregular LDPC codes by the use of the difference between two parameters $a$ and $b$. In the following table we provide the lower bounds for ETSs belonging to irregular LDPC codes whose column weights are a subset of $\{2,3,4,5,6\}$ and their Tanner graphs have girth 8.   
\begin{table}[h]
\begin{center}
\begin{tabular}{|c|c|}
\hline
$a=4$ & (4,1) ETS and $d(v)\in\{2,3\}$, (4,2) ETS and $d(v)\in\{2,4\}$\\ 
    & (4,3) ETS and $d(v)\in\{2,5\}$, (4,3) ETS and $d(v)\in\{2,3,4\}$\\ 
\hline    
$a=5$ & (5,4) ETS and $d(v)\in\{2,6\}$, (5,4) ETS and $d(v)\in\{3,4\}$\\
     & (5,2) ETS and $d(v)\in\{2,3,5\}$, (5,3) ETS and $d(v)\in\{2,3,6\}$\\
     & (5,3) ETS and $d(v)\in\{2,4,5\}$, (5,4) ETS and $d(v)\in\{2,4,6\}$\\
     & (5,4) ETS and $d(v)\in\{2,3,4,5\}$\\
\hline
$a=6$ & (6,2) ETS and $d(v)\in\{3,5\}$, (6,3) ETS and $d(v)\in\{3,6\}$\\
    & (6,3) ETS and  $d(v)\in\{2,5,6\}$, (6,3) ETS and $d(v)\in\{3,4,5\}$\\
    & (6,4) ETS and $d(v)\in\{3,4,6\}$, (6,5) ETS and $d(v)\in\{3,5,6\}$\\
    & (6,5) ETS and  $d(v)\in\{2,4,5,6\}$, (6,4) ETS and  $d(v)\in\{2,3,5,6\}$\\
\hline
$a=7$ & (7,5) ETS and $d(v)\in\{4,5\}$, (7,6) ETS and $d(v)\in\{4,6\}$\\ 
   & (7,3) ETS and $d(v)\in\{3,4,5,6\}$, (7,4) ETS and $d(v)\in\{2,3,4,5,6\}$ \\
\hline
$a=8$ & $d(v)\in\{4,6\}$, $d(v)\in\{4,5,6\}$\\
\hline
$a=9$ & $d(v)\in\{5,6\}$\\
\hline
\end{tabular}
\caption{Lower bounds for the size of ETSs belonging to irregular LDPC codes whose column weights are a subset of $\{2,3,4,5,6\}$ and Tanner graphs have girth 8}
\end{center}
\end{table} 

\section {Lower bounds for the size of ETSs for variable-regular LDPC codes with girth 10}
The normal graph corresponding to an elementary trapping set belonging to LDPC codes with girth 10 is triangle-free and has no 4-cycles. The following definition and theorem contribute to present lower bounds of the size of ETSs.
\begin{Definition}
	Suppose a cycle of length $i$ is denoted by $C_i$. If a simple graph has girth $g$ then it is $i$-cycle free for each $3\leq i\leq g-1$. The maximum number of edges of a graph with $n$ vertices and girth $g$ is denoted by $ex(n,{C_3,C_4,\dots,C_{g-1}})$. 
\end{Definition}
For example, in a triangle-free graph we have: $ex(n,C_3)\leq\frac{n^2}{4}$. To consider the maximum number of edges of a graph with girth  5 we utilize the following  Theorem.
\begin{Theorem}~\label{lemOrder}
	$\cite{Garnick}$ For a graph with $n$ vertices and girth $g=5$, the maximum number of edges is  as follows:
	\begin{center}
		$ex(n,{C_3,C_4})\leq\frac{1}{2}n^\frac{3}{2}$.
	\end{center}
\end{Theorem}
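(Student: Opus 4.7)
The plan is to prove this classical Moore-type bound via a double counting argument on paths of length two (often called \emph{cherries}). Let $G$ be a graph with $n$ vertices and girth $5$, so $G$ is both triangle-free and $4$-cycle free. The key structural fact I will exploit is that any two distinct vertices of $G$ share at most one common neighbor: if $u,v$ had two common neighbors $x,y$, then $u,x,v,y$ would form a $4$-cycle, contradicting girth $5$.

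Next, I will count the number $P$ of paths of length $2$ (unordered, i.e., with a specified middle vertex and an unordered pair of endpoints) in two ways. Counting by the middle vertex gives
\begin{equation*}
P \;=\; \sum_{w \in V} \binom{d(w)}{2}.
\end{equation*}
Counting by the pair of endpoints, each unordered pair $\{u,v\}$ of distinct vertices serves as the endpoints of at most one such path, by the common-neighbor observation above. Hence
\begin{equation*}
P \;\leq\; \binom{n}{2}.
\end{equation*}
Combining these yields $\sum_{w} \binom{d(w)}{2} \leq \binom{n}{2}$.

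Now I will convert this degree-sum inequality into a bound on $|E|$. Write $\bar{d} = \frac{1}{n}\sum_w d(w) = \frac{2|E|}{n}$ for the average degree. Since the function $x \mapsto \binom{x}{2} = \frac{x(x-1)}{2}$ is convex, Jensen's inequality gives
\begin{equation*}
\sum_{w \in V} \binom{d(w)}{2} \;\geq\; n\binom{\bar{d}}{2} \;=\; \frac{n\,\bar{d}(\bar{d}-1)}{2}.
\end{equation*}
Combining with the previous inequality, $\bar{d}(\bar{d}-1) \leq n-1$, and solving the resulting quadratic in $\bar{d}$ produces $\bar{d} \leq \tfrac{1}{2}\bigl(1+\sqrt{4n-3}\bigr)$. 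Multiplying by $n/2$ and using $\sqrt{4n-3} \leq 2\sqrt{n}$ then delivers $|E| \leq \tfrac{1}{2}n^{3/2}$ (up to the lower order term $n/4$, which is absorbed in the leading asymptotic).

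The main obstacle is the last algebraic step: the Jensen inequality together with $\bar{d}(\bar{d}-1)\leq n-1$ naturally produces a bound of the form $|E| \leq \tfrac{1}{2}n^{3/2} + O(n)$, and to reach exactly the stated $\tfrac{1}{2}n^{3/2}$ one must either treat it as the leading asymptotic or use Garnick's sharper refinement that handles the boundary term (for instance, by distinguishing vertices of very small degree or absorbing the $n/4$ correction into a tighter case analysis). All the structural input the argument needs, however, comes from the single observation that girth $5$ forces $|N(u)\cap N(v)| \leq 1$; everything after that is convexity and arithmetic.
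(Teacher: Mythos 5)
First, note that the paper does not prove this statement at all: it is quoted as a known extremal result from the cited reference of Garnick, Kwong and Lazebnik, so there is no in-paper argument to compare yours against. Judged on its own terms, your proposal follows the standard cherry-counting route but stops short of the stated inequality. As you concede in your final paragraph, the chain $\sum_{w}\binom{d(w)}{2}\leq\binom{n}{2}$, Jensen, and $\sqrt{4n-3}\leq 2\sqrt{n}$ only yields $|E|\leq\tfrac{1}{2}n^{3/2}+\tfrac{n}{4}$. Declaring the additive $n/4$ to be ``absorbed in the leading asymptotic'' is not a proof of the clean bound $|E|\leq\tfrac{1}{2}n^{3/2}$: the theorem as stated is an exact inequality for every $n$, not an asymptotic one, and the paper applies it at very small $n$ (e.g.\ $n=10$ and $n=11$ in the $\gamma=4$, $g=10$ analysis), where an extra $n/4$ would change the conclusions. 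So this is a genuine gap, not a cosmetic one.

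The gap is easy to close, and the missing ingredient is the triangle-freeness you state but never actually use. In a triangle-free graph the two endpoints of a path of length two must be non-adjacent, so each adjacent pair contributes no cherry and your endpoint count improves to $P\leq\binom{n}{2}-|E|$. Feeding this into the same convexity step gives $n\bar{d}(\bar{d}-1)/2\leq n(n-1)/2-n\bar{d}/2$, i.e.\ $\bar{d}^{2}\leq n-1$, hence $2|E|=n\bar{d}\leq n\sqrt{n-1}$ and $|E|\leq\tfrac{1}{2}n\sqrt{n-1}<\tfrac{1}{2}n^{3/2}$, which is the stated bound (in fact slightly stronger). Equivalently, you can argue vertex-locally: for each $u$ the sets $\{u\}$, $N(u)$, and the punctured neighborhoods $N(w)\setminus\{u\}$ for $w\in N(u)$ are pairwise disjoint by girth $5$, so $\sum_{w\in N(u)}d(w)\leq n-1$; summing over $u$ gives $\sum_{w}d(w)^{2}\leq n(n-1)$, and Cauchy--Schwarz finishes. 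Either repair turns your sketch into a complete proof of the exact inequality.
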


Now, by assuming the maximum number of edges in a normal graph with girth 5 we propose the following theorem related to the Tanner graph with girth 10.
\begin{Theorem}~\label{lemOrder}
In a variable-regular LDPC code, whose Tanner graph has girth at least 10, there is no $(a,b)$ ETS of size less than $(\gamma-1)^2+1$, that is $a\geq (\gamma-1)^2+1$, where $a,b$  satisfy the inequality $\frac{b}{a}<1$.
\end{Theorem}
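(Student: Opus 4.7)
The plan is to mirror the argument of Theorem 2, but replacing the triangle-free edge bound by the stronger girth-5 bound supplied by Theorem 4. First I would note that the hypothesis $g \geq 10$ on the Tanner graph, combined with the correspondence between $2k$-cycles in an ETS and $k$-cycles in its normal graph (established in Section II), forces the normal graph to be free of cycles of lengths $3$ and $4$. Hence the normal graph $G$ has $a$ vertices and girth at least $5$, so Theorem 4 applies and gives
\begin{equation*}
|E(G)| \;\leq\; \tfrac{1}{2}\,a^{3/2}.
\end{equation*}

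Next I would translate the ETS hypothesis $\frac{b}{a}<1$ into an inequality on $|E(G)|$ using Lemma 1(iii). Since $b = a\gamma - 2|E(G)|$, the condition $b < a$ rearranges to
\begin{equation*}
|E(G)| \;>\; \tfrac{a(\gamma-1)}{2}.
\end{equation*}
Chaining this lower bound with the Garnick upper bound yields $a(\gamma-1) < a^{3/2}$. Dividing by $a>0$ (the ETS is nonempty) gives $\gamma - 1 < a^{1/2}$, i.e.\ $a > (\gamma-1)^2$. Because $a$ is a positive integer, this upgrades to $a \geq (\gamma-1)^2+1$, which is exactly the desired bound.

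There is really no serious obstacle here: the whole proof is a two-line combination of Lemma 1(iii), the girth correspondence from Section II, and the Garnick extremal bound of Theorem 4. The only point that deserves a brief sentence of justification is the translation from Tanner-graph girth $10$ to normal-graph girth $5$, since the reader needs to see that every cycle of length $3$ or $4$ in the normal graph would lift to a cycle of length $6$ or $8$ in the Tanner graph, contradicting $g \geq 10$. Everything else is algebraic manipulation of the same type already used in Theorem 2.
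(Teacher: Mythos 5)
Your proof is correct and follows essentially the same route as the paper: Lemma 1(iii) to express $b=a\gamma-2|E|$, the girth correspondence to get a normal graph of girth at least $5$, and the Garnick bound of Theorem 4, followed by the same algebra. If anything, your version is slightly cleaner, since the paper phrases the argument as ``assume the normal graph has the maximum number of edges'' whereas you chain the two inequalities on $|E|$ directly.
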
 
\begin{proof}
Take an $(a,b)$ ETS whose normal graph has the maximum number of edges. In this case $b=a\gamma-2|E|=a\gamma-2(\frac{a^\frac{3}{2}}{2})$. So the number of unsatisfied check nodes is $a\gamma-a^\frac{3}{2}$. In order to consider dominant $(a,b)$ ETSs we focus on those ETSs with the property of $\frac{b}{a}<1$. So we have $\frac{b}{a}=\frac{a\gamma-a^\frac{3}{2}}{a}=\gamma-a^\frac{1}{2}<1$.  As a consequence, there is no $(a,b)$ ETS of size less than or equal to $(\gamma-1)^2$ with the property of $\frac{b}{a}<1$.
\end{proof}

In the following we provide some properties related to a normal graph corresponding to an ETS belonging to a Tanner graph with girth at least 10. By this consideration we conclude that the proposed lower bound above can be also improved by more investigations.
\begin{itemize} 
\item As mentioned in Lemma 1 $(i)$, the normal graph  has at least a vertex of degree $\gamma$. Suppose $d(v)=\gamma$. If there is an edge between two vertices of neighbours of $v$ then the normal graph has a triangle. So there are no edges between neighbours of $v$.  
\item Suppose $z,z'\in N(v)$, in other words $z$ and $z'$ are connected to $v$. If $w$ is a neighbour of $z$ then it does not connect $z'$. Otherwise, $z,z'\in N(w)$ and if $|N(v)\cap N(w)|\geq2$ then the normal graph has 4-cycles. As a result each vertex other than $v\cup N(v)$ has only one connection with $v\cup N(v)$.   
\end{itemize}
In the following we apply Theorem 5 as well as the two items above to provide the exact lower bound for the size of an ETS where $\gamma=3$ and $\gamma=4$.
\begin{itemize}
\item ETSs belonging to a Tanner graph with girth 10 and $\gamma=3$ have no triangles and is 4-cycle free so according to Theorem 5 it has the size of at least 5. By making use of the mentioned properties we conclude that $a\geq7$. A normal graph with 7 vertices and girth 5 is illustrated in Fig. 5 $(a)$.

\item We prove for $\gamma=4$ and $g=10$ the lower bound of the size of ETSs is 12. Theorem 5 gives $a\geq10$. According to Lemma 1 $(iv)$, if we take $a=10$ then we have $b\leq8$. Suppose that there is a $(10,8)$ ETS. Lemma 1 $(iii)$ proves that the number of edges in the normal graph is 16. According to Theorem 4 the  maximum number of edges of the graph with girth 5 and  10 vertices is 15. So the normal graph with 16 edges has a triangle or a 4-cycle.  Now by similar proof we illustrate that for $\gamma=4$ and $g=10$ there is no $(11,b)$ ETS for which $\frac{b}{a}<1$. The maximum number of edges of a graph with 11 vertices and girth 5 ,which is obtained in $\cite{Garnick}$, is 16. Therefore, if we take $b=10$ then the maximum number of edges in the normal graph is 17 and so it has a triangle or a 4-cycle. An example of such a normal graph is shown in  Fig. 5 $(b)$ which has a 4-cycle.  Finally, Fig. 5 $(c)$ demonstrates a $(12,10)$ ETS with $\frac{b}{a}<1$.
\begin{figure}[h!]
\centering
\includegraphics[scale=.3]{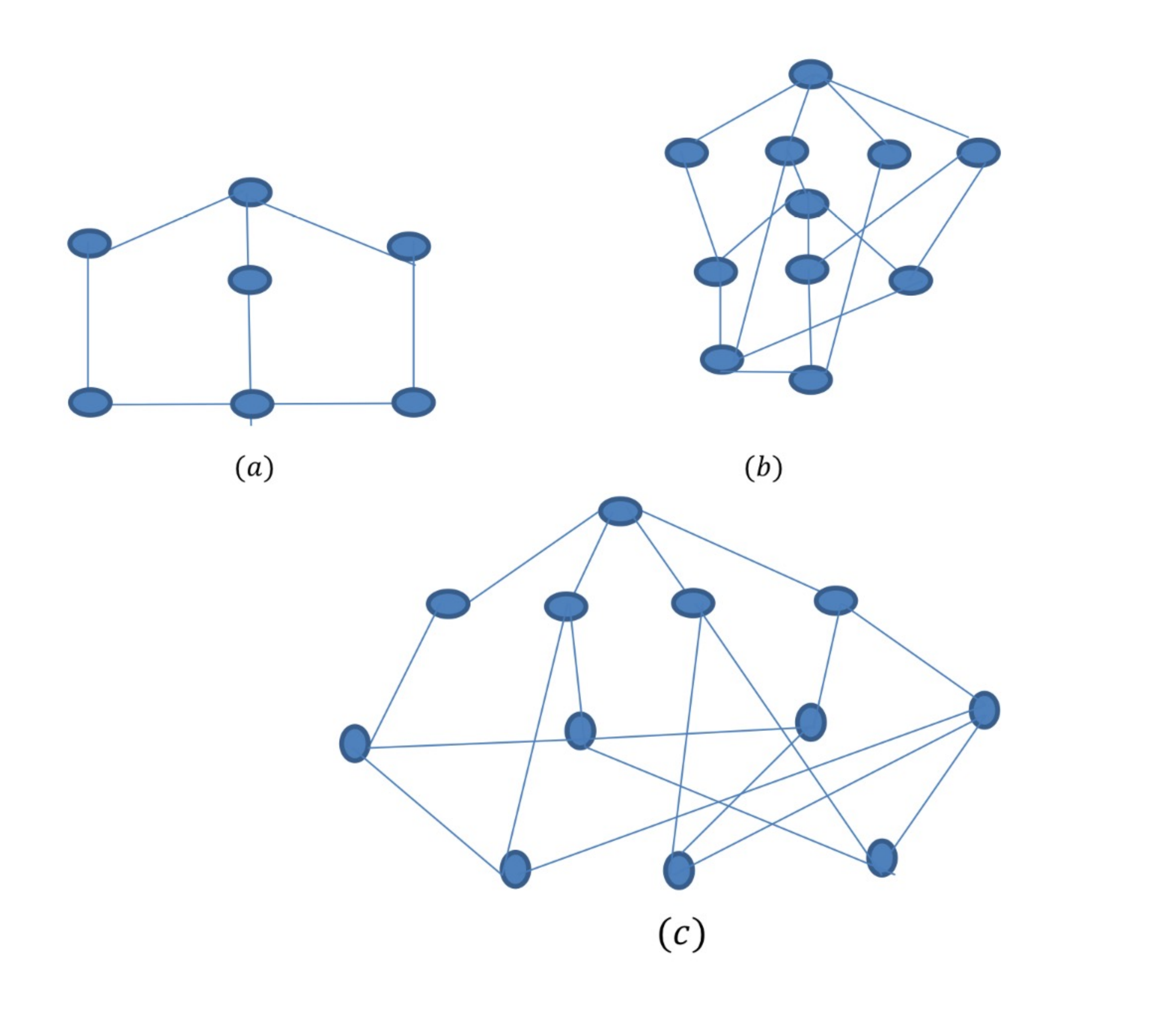}\\
\caption{$(a)$ is a normal graph corresponding to a (7,5) ETS with $\gamma=3$, $(b)$ is a normal graph corresponding to a (11,10) ETS with $\gamma=4$ which has a 4-cycle and $(c)$ is a normal graph corresponding to a (12,10) ETS with $\gamma=4$ which is free of triangles and 4-cycles}
\end{figure}
\end{itemize}
\section{Lower bounds for the size of ETSs for variable-regular LDPC codes with $g\geq12$}

Now we generalize our results for all values of the girth in two steps. In the first step we investigate the lower bound for the size of ETSs belonging to variable-regular LDPC codes whose Tanner graph have girth $g=2(2k+1)$. In this case the normal graph has girth $g=2k+1$ and according to Difinition 5 it is $i$-cycle free for each $3\leq i\leq 2k$ for which we have the following theorem as our main tool to obtain our result.
\begin{Theorem}~\label{lemOrder}
	$\cite{Furedi}$ For a graph with $n$ vertices and girth $g=2k+1$, the maximum number of edges is  as follows:
	\begin{center}
		$ex(n,{C_3,C_4,\dots,C_{2k}})<\frac{1}{2}n^{1+\frac{1}{k}}+\frac{1}{2}n$.
	\end{center}
\end{Theorem}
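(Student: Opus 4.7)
The plan is to exploit the girth condition to control the growth of breadth-first-search balls and combine this with a counting/convexity argument; the bound is essentially a Moore-type inequality converted from minimum to average degree. Fix any vertex $v$ in $G$ and let $N_i(v)$ denote the set of vertices at distance exactly $i$ from $v$. Because $g \geq 2k+1$, three structural facts hold for each $0 \leq i \leq k-1$: (a) $N_i(v)$ is an independent set, since an edge inside it would close an odd cycle of length $2i+1 \leq 2k-1$; (b) each $u \in N_i(v)$ has exactly one neighbor in $N_{i-1}(v)$, since two such neighbors would create an even cycle of length $\leq 2i \leq 2k$; (c) no two vertices of $N_i(v)$ share a neighbor in $N_{i+1}(v)$, by the same short-cycle reasoning. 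Consequently each $u \in N_i(v)$ contributes $d_u - 1$ fresh vertices to $N_{i+1}(v)$, giving
\[
|N_{i+1}(v)| \;\geq\; \sum_{u\in N_i(v)}(d_u-1), \qquad 0\leq i\leq k-1,
\]
and all vertices in $\bigcup_{i=0}^{k} N_i(v)$ are pairwise distinct, so $\sum_{i=0}^{k}|N_i(v)| \leq n$.

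Next I would unroll these inequalities. Writing $\delta$ for the minimum degree of $G$, chaining gives $|N_i(v)| \geq \delta(\delta-1)^{i-1}$ for $1 \leq i \leq k$, which assembles into the classical Moore bound
\[
n \;\geq\; 1 + \delta + \delta(\delta-1) + \cdots + \delta(\delta-1)^{k-1}.
\]
From this one immediately reads off $\delta \leq n^{1/k}+1$, a bound already of the desired shape --- but only on the \emph{minimum} degree.

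The main obstacle, and the real content of the theorem, is upgrading this minimum-degree statement to a bound on the \emph{average} degree $\bar d = 2|E|/n$ with the correct leading constant $\tfrac12$. A quick reduction --- pass to a subgraph of minimum degree at least $\bar d/2$ and apply the Moore bound there --- loses a factor of two and is not good enough. To recover the sharp constant I would instead apply the layer inequality above simultaneously for every choice of root $v$, sum the resulting lower bounds on $|N_{i+1}(v)|$ across $v$, and invoke Jensen's inequality for the convex map $x \mapsto x(x-1)$ on the degree sequence $(d_u)_{u \in V}$. This yields $\sum_v d_v = 2|E| \leq n\bigl(n^{1/k}+1\bigr)$, which is exactly the claimed estimate, with the additive $\tfrac12 n$ emerging as the lower-order correction. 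I expect the averaging bookkeeping --- keeping the inequality tight as $i$ grows from $1$ to $k$, and tracking the slack that accounts for the strict inequality in the statement --- to be the main technical effort; everything before that step is essentially forced by the girth condition.
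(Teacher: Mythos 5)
The paper offers no proof of this statement at all: it is quoted directly from the F\"uredi--Simonovits survey \cite{Furedi}, so there is nothing internal to compare your argument against, and your attempt must stand on its own. The first half of your proposal is sound: the breadth-first-search layer analysis under the girth hypothesis, the tree-like growth $|N_{i+1}(v)| \geq \sum_{u\in N_i(v)}(d_u-1)$, and the resulting Moore bound $n \geq 1+\delta+\delta(\delta-1)+\cdots+\delta(\delta-1)^{k-1}$ are all correct, and you correctly identify that the entire content of the theorem lies in replacing the minimum degree $\delta$ by the average degree $\bar d = 2|E|/n$ without losing the constant.

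That replacement, however, is exactly where your argument has a genuine gap. Summing the layer inequalities over all roots $v$ and applying Jensen to the convex map $x\mapsto x(x-1)$ handles only the case $k=2$ (girth $5$), where the relevant quantity is $\sum_v d_v(d_v-1)\geq n\bar d(\bar d-1)$. For $k\geq 3$, the quantity you must bound from below --- equivalently, the number of non-backtracking walks of length $k$, or $\sum_v |N_k(v)|$ --- is a sum over walks of products $(d_{u_1}-1)(d_{u_2}-1)\cdots(d_{u_{k-1}}-1)$ in which the degrees of consecutive vertices are correlated; termwise convexity in a single degree variable does not deliver the product lower bound $n\bar d(\bar d-1)^{k-1}$, and a graph concentrating its high-degree vertices in a small region is precisely the configuration a naive Jensen step fails to exclude. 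The statement that the Moore bound holds with average degree in place of minimum degree is a theorem of Alon, Hoory and Linial, and its proof is not ``averaging bookkeeping'': it analyzes the non-backtracking random walk started from the stationary distribution on directed edges and uses a carefully arranged multivariable convexity argument. Without that ingredient (or an equivalent one, such as the Bondy--Simonovits path-counting machinery), your outline only yields the weaker bound obtained by passing to a subgraph of minimum degree exceeding $|E|/n$, which loses the factor of two in the leading term that you yourself note is unacceptable. The gap is therefore not a detail to be checked but the theorem's central difficulty, and the proposal as written does not close it.
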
 

\begin{Theorem}~\label{lemOrder}
In a variable-regular LDPC code, whose Tanner graph has girth $g=2(2k+1)$, there is no $(a,b)$ ETS of size less than or equal to $(\gamma-2)^k$, where $a,b$  satisfy the inequality $\frac{b}{a}<1$.
\end{Theorem}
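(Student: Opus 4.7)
The plan is to follow the same template as Theorems 2 and 5: translate the Tanner graph girth condition into a forbidden cycle condition on the normal graph, apply an extremal bound on the edge count $|E|$, and then combine with the identity $b=a\gamma-2|E|$ from Lemma 1(iii) together with $b/a<1$ to extract a lower bound on $a$.

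First I would observe that a $2i$-cycle in the Tanner graph corresponds to an $i$-cycle in the normal graph (as was already recorded in Section II). So if the Tanner graph has girth $g=2(2k+1)$, then the normal graph has girth at least $2k+1$, which means it is $C_i$-free for every $3\le i\le 2k$. Theorem 6 then applies to the normal graph on $a$ vertices and gives
\begin{equation*}
2|E|<a^{1+\frac{1}{k}}+a.
\end{equation*}

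Next I would substitute this into Lemma 1(iii). From $b=a\gamma-2|E|$ and the condition $b/a<1$ we obtain $a\gamma-a<2|E|$, i.e.\ $a(\gamma-1)<2|E|$. Combining this with the upper bound on $2|E|$ yields
\begin{equation*}
a(\gamma-1)<a^{1+\frac{1}{k}}+a,
\end{equation*}
and dividing by $a$ gives $\gamma-2<a^{1/k}$, so $a>(\gamma-2)^{k}$. Consequently no $(a,b)$ ETS with $b/a<1$ can have $a\le(\gamma-2)^{k}$, which is exactly the claimed bound.

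I do not expect a serious obstacle: the whole argument is a direct plug-in of the Bondy--Simonovits / F\"uredi-type bound (Theorem 6) into the degree-sum identity, exactly parallel to how Theorem 5 used Theorem 4. The only point that needs a word of care is the girth translation between the Tanner graph and the normal graph, which is why the hypothesis is written as $g=2(2k+1)$ rather than $g=2k+1$; once that translation is explicit, the extremal bound on triangle-, $C_4$-, \dots, $C_{2k}$-free graphs gives the result immediately. A companion statement for $g=2(2k+2)$ (announced in the abstract) will presumably be obtained by the same method, using the corresponding extremal bound for graphs of even girth $2k+2$.
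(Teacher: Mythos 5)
Your proposal is correct and follows essentially the same route as the paper: translate the Tanner-graph girth $2(2k+1)$ into the normal graph being $C_i$-free for $3\le i\le 2k$, bound $2|E|$ by Theorem 6, and combine with $b=a\gamma-2|E|$ and $b/a<1$ to get $\gamma-2<a^{1/k}$. Your write-up of the final step (chaining $a(\gamma-1)<2|E|<a^{1+1/k}+a$ directly) is in fact a slightly cleaner rendering of the paper's ``assume the normal graph has the maximum number of edges'' argument, but the underlying idea is identical.
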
 
\begin{proof}
The normal graph has $a$ vertices and $|E|$ edges. Since the Tanner graph has girth $g$, it is $2i$-cycle free for all $3\leq i\leq \frac{g}{2}-1$. Therefore, according to Definition 5, the normal graph is $i$-cycle free and so according to Theorem 6 the maximum number of edges is less than  $\frac{1}{2}a^{1+\frac{1}{k}}+\frac{a}{2}$. Without loss of generality, assume that the  number of edges in the normal graph is $\frac{1}{2}a^{1+\frac{1}{k}}+\frac{a}{2}-1$. In this case $b=a\gamma-2|E|=a\gamma-2(\frac{1}{2}a^{1+\frac{1}{k}}+\frac{a}{2}-1)$. So we have $\frac{b}{a}=\frac{a\gamma-a^{1+\frac{1}{k}}-a+2}{a}=\gamma-a^{\frac{1}{k}}-1+\frac{2}{a}<1$. Therefore, the inequality $\gamma-a^{\frac{1}{k}}-1<\gamma-a^{\frac{1}{k}}-1+\frac{2}{a}<1$ results in the inequality $\gamma-2<a^{\frac{1}{k}}$. As a consequence, there is no $(a,b)$ ETS of size less than or equal to $(\gamma-2)^k$ with the property of $\frac{b}{a}<1$.
\end{proof}

In the second step we investigate the lower bound for the size of ETSs belonging to variable-regular LDPC codes whose Tanner graphs have girth $g=2(2k+2)$. In this case the normal graph has girth $g=2(k+1)$ and according to Difinition 5 it is $i$-cycle free for each $3\leq i\leq 2k+1$ for which we have the following theorem which contributes to achieve our result.
\begin{Theorem}~\label{lemOrder}
	$\cite{Furedi}$ For a graph with $n$ vertices and girth $g=2k+2$, the maximum number of edges is  as follows:
	\begin{center}
		$ex(n,{C_3,C_4,\dots,C_{2k+1}})<\frac{1}{2^{1+\frac{1}{k}}}n^{1+\frac{1}{k}}+\frac{1}{2}n$.
	\end{center}
\end{Theorem}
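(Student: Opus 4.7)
The statement is an extremal bound of Moore--Bondy--Simonovits type for graphs of even girth $g=2k+2$. The natural line of attack is a BFS-ball counting argument combined with a minimum-degree reduction, then a refinement that squeezes out the sharp leading constant. I would organize the proof in three stages.

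\emph{Reduction to large minimum degree.} Given $G$ with $n$ vertices, $e$ edges, and girth $\geq 2k+2$, let $\bar d = 2e/n$ be the average degree. I would iteratively delete any vertex of degree strictly less than $\bar d/2$. Each such deletion removes fewer than $\bar d/2$ edges, so after at most $n-1$ deletions the total number of edges lost is strictly less than $n\bar d/2 = e$; hence the procedure terminates at a nonempty subgraph $G'$ with minimum degree $\delta \geq \bar d/2 \geq e/n$, and $G'$ inherits the girth condition.

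\emph{Moore-type count on $G'$.} Fix any vertex $v \in G'$ and perform a breadth-first search to depth $k$. Because every cycle in $G'$ has length at least $2k+2 > 2k$, the ball $B_k(v)$ induces a tree. Letting $L_i$ denote the vertices at distance exactly $i$ from $v$, the tree structure forces $|L_1| \geq \delta$ and $|L_i| \geq (\delta-1)|L_{i-1}|$ for $2 \leq i \leq k$. Summing the resulting geometric progression yields $n \geq |B_k(v)| \geq 1 + \delta\sum_{i=0}^{k-1}(\delta-1)^i$, which implies $\delta \lesssim n^{1/k}$ and, after substituting $\delta \geq e/n$, a bound of the form $e \leq Cn^{1+1/k} + \tfrac{1}{2}n$ for some constant $C$. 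This already produces the correct exponent $1+\tfrac{1}{k}$ and the correct lower-order term $\tfrac{1}{2}n$ claimed in the theorem.

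\emph{Sharpening the constant.} The main obstacle is replacing the crude $C$ obtained above by the sharp value $C = 2^{-(1+1/k)}$. The gap is exactly a factor of $2^{1/k}$, and closing it is the core of Füredi's refinement. My plan here is to push the BFS one level deeper, counting also the vertices at distance $k+1$ from $v$: the even-girth hypothesis $g = 2k+2$ guarantees that any two vertices of $L_{k+1}$ with a common neighbour in $L_k$ must be distinct, and either an explicit pairing argument on the edges between $L_k$ and $L_{k+1}$ or a passage to the bipartite double cover (which inherits the same girth) produces the missing factor of $2^{1/k}$. Since this last step is precisely the technical content of \cite{Furedi}, I would cite that paper for the combinatorial bookkeeping and not reproduce it in detail, confining the write-up to the reduction and Moore-count steps above.
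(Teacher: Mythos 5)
The first thing to note is that the paper does not prove this statement at all: it is quoted verbatim from the F\"uredi--Simonovits survey with the citation \cite{Furedi} and then used as a black box in the theorem that follows it. So there is no internal proof to compare your attempt against; the only meaningful question is whether your sketch would stand on its own.

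As a self-contained argument it has a genuine gap, located exactly where you place it but larger than you claim. Your first two stages are sound and standard: the deletion argument gives a nonempty subgraph with minimum degree $\delta\geq \bar d/2=e/n$, and the girth condition $g\geq 2k+2$ does make the radius-$k$ ball a tree, so the Moore count gives $\delta\leq n^{1/k}+1$ and hence $e\leq n^{1+1/k}+n$. But that is off from the claimed constant $2^{-(1+1/k)}$ by a factor of $2^{1+1/k}$, not $2^{1/k}$ as you assert; the remaining gap would be $2^{1/k}$ only if you had first recovered the odd-girth constant $\tfrac{1}{2}$, which your $\bar d/2$ reduction does not do (one needs something like the Alon--Hoory--Linial Moore bound for \emph{average} degree to avoid losing that factor of $2$). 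More importantly, the one concrete claim in your sharpening stage --- that ``any two vertices of $L_{k+1}$ with a common neighbour in $L_k$ must be distinct'' --- is vacuous as written, and neither the pairing argument nor the bipartite double cover is developed far enough to extract any improvement in the constant. Since you then defer precisely this step to \cite{Furedi}, your write-up proves only a weaker inequality and cites the source for the stated one, which is what the paper already does more cleanly by citing it outright. Either reproduce the F\"uredi argument for the constant in full, or present the result as a quoted theorem; the hybrid you propose is neither a proof nor a clean citation.
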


\begin{Theorem}~\label{lemOrder}
In a variable-regular LDPC code, whose Tanner graph has girth $g=2(2k+2)$, there is no $(a,b)$ ETS of size less than or equal to $2(\gamma-2)^k$, where $a,b$  satisfy the inequality $\frac{b}{a}<1$.
\end{Theorem}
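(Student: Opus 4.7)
The plan is to mirror the strategy used for Theorem 7, substituting the tighter extremal bound of Theorem 8 for the one used there. The two pillars will again be Lemma 1(iii), which yields $b = a\gamma - 2|E|$ for the normal graph of the ETS, and an extremal upper bound on $|E|$ coming from the girth of the normal graph.

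First I would translate the girth hypothesis on the Tanner graph into a girth hypothesis on its normal graph. Because a $2i$-cycle in the Tanner graph corresponds to an $i$-cycle in the normal graph, a Tanner girth of $g = 2(2k+2) = 4k+4$ forces the normal graph to be $i$-cycle free for every $3 \leq i \leq 2k+1$, i.e.\ to have girth at least $2k+2$. Theorem 8 applied with $n = a$ then yields
\[
|E| < \frac{1}{2^{1+1/k}}\, a^{1+1/k} + \frac{a}{2}.
\]

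Next I would combine this with Lemma 1(iii). The dominant-ETS hypothesis $\frac{b}{a} < 1$ is equivalent to $2|E| > a(\gamma - 1)$, so pairing it with the bound above gives
\[
\frac{a(\gamma - 1)}{2} < \frac{1}{2^{1+1/k}}\, a^{1+1/k} + \frac{a}{2}.
\]
Moving the $\frac{a}{2}$ to the left, dividing by $a$, multiplying through by $2^{1+1/k}$, and finally raising both sides to the $k$-th power should produce $2(\gamma - 2)^k < a$, which is the desired bound.

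The main obstacle I anticipate is purely bookkeeping: the factor $2^{1+1/k}$ must be tracked carefully so that, upon raising to the $k$-th power, exactly one factor of $2$ is left in front of $(\gamma-2)^k$, reproducing the coefficient that distinguishes this bound from Theorem 7's $(\gamma - 2)^k$. Since the bound in Theorem 8 is strict, the resulting inequality $a > 2(\gamma - 2)^k$ immediately rules out any $(a,b)$ ETS with $a \leq 2(\gamma - 2)^k$ satisfying $\frac{b}{a} < 1$, completing the argument.
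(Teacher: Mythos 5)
Your proposal is correct and follows essentially the same route as the paper: convert the Tanner-graph girth $2(2k+2)$ into girth $2k+2$ for the normal graph, invoke Theorem 8 for the edge bound, and combine it with $b=a\gamma-2|E|$ and $\frac{b}{a}<1$ to extract $a>2(\gamma-2)^k$. If anything, your phrasing via the equivalence $\frac{b}{a}<1\iff 2|E|>a(\gamma-1)$ is cleaner than the paper's device of ``assuming'' the normal graph attains the (possibly non-integral) maximum number of edges, but the underlying argument is identical.
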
 
\begin{proof}
 The Tanner graph has girth $g$, it is $2i$-cycle free for all $3\leq i\leq\frac{g}{2}-1$. Therefore, the normal graph is $i$-cycle free and according to Theorem 8 the maximum number of edges is less than  $\frac{1}{2^{1+\frac{1}{k}}}a^{1+\frac{1}{k}}+\frac{a}{2}$. Assume that the normal graph has $\frac{1}{2^{1+\frac{1}{k}}}a^{1+\frac{1}{k}}+\frac{a}{2}-1$ edges. In this case $\frac{b}{a}=\frac{a\gamma-2(\frac{1}{1+2^{\frac{1}{k}}}a^{1+\frac{1}{k}}+\frac{a}{2}-1)}{a}$. By considering the inequality  $\frac{b}{a}<1$ and similar to the proof of Theorem 7 we obtain $\gamma-\frac{a}{2}^\frac{1}{k}<2$. Consequently, there is no $(a,b)$ ETS of size less than or equal to $2(\gamma-2)^k$ with the property of $\frac{b}{a}<1$.
\end{proof}
\section{Conclusion}\label{}
In this paper, we  provide lower bounds for the size of $(a,b)$ elementary trapping sets for variable-regular LDPC codes with any girth and irregular LDPC codes with girth 8, where $a$ is the number  of variable nodes and $b$ is the number of check nodes of odd degrees which satisfy the inequality $\frac{b}{a}<1$. We analytically demonstrate that depending on the number of variable nodes, the number of degree-one check nodes and the column weight, some of ETSs do not exist. It indicates that the non-existence of some of $(a,b)$ ETSs not only is independent of the girth of the Tanner graph but also in order to consider them  we do not need to conduct exhaustive search algorithms. Also, making use of some results on the maximum number of edges of a graph based on its girth, we  provide the lower bounds for the size of ETSs for Tanner graphs with girths 8, 10 and larger. In fact, we prove that the Tanner graph of a variable-regular LDPC code, with the column weight  $\gamma$ and girth 8 contains no $(a,b)$ ETS of size $a<2\gamma-1$. We also demonstrate that the lower bound of degree-one check nodes is $\gamma$. We show that these lower bound are tight. Along with making use of them we present a method to obtain  the lower bounds of ETSs belonging to irregular LDPC codes. We apply our proposed method on irregular LDPC codes whose column weight values are a subset of $\{2,3,4,5,6\}$. In addition, we prove that  variable-regular LDPC codes with girth 10 contain no ETS of size  $a\leq(\gamma-1)^2$. Moreover, for this case the lower bounds for $a$, assuming $\gamma=3$ and 4, are also improved which are 7 and 12, respectively.  Finally, we generalize our results for all values of the girth, as follows. Variable-regular LDPC codes with girths $g=2(2k+1)$ and $g=2(2k+2)$ contain no $(a,b)$ ETSs of sizes  $a\leq(\gamma-2)^k$ and $a\leq2(\gamma-2)^k$, respectively.


\begin{thebibliography}{1}
\bibitem{2011}
 Q. Dio, Q. Hung, S. Lin and K. Abdel-Ghaffar,  ``Trapping sets of structured LDPC codes," {\it IEEE Int. Symp. Inform. theory (ISIT) }, (2011), 1086--1090. 

\bibitem{2014}
M. Karimi, A. H. Banihashemi,  ``On characterization of elementary trapping sets of variable-regular LDCP codes," {\it IEEE Trans. Inf. Theory}, {\it vol}.{\bf 60}, {\it no}.{\bf 9}
  (2014), 5188--5203.

\bibitem{2015}
Y. Hashemi, A. H. Banihashemi,  ``On characterization and efficient exhaustive search  of elementary trapping sets of variable-regular LDCP codes," {\it IEEE Commun. Lett.}, {\it vol}.{\bf 19}, {\it no}.{\bf 3}
  (2015), 323--326.

\bibitem{2016}
Y. Hashemi, A. H. Banihashemi,  ``New characterization and efficient exhaustive search algorithm for leafless elementary trapping sets of variable-regular LDCP codes," {\it IEEE Trans. Inf. Theory}, {\it vol}.{\bf 62}, {\it no}.{\bf 12}
  (2016), 6713--6736.  

\bibitem{2010}
S. Laendner, T. Hehn, O. Milenkovic and J. B. Huber,  ``Characterization of small trapping sets in LDCP codes from Steiner triple systems," {\it in proc. 6th Int. Symp. Turbo Codes and Iterative Inf. Process.}, {\it Brest, France, Sep}.
  (2010), 93--97. 

\bibitem{Vasic}
M. Diouf, D. Declercq, S. Ouya and B. Vasic,  ``A PEG-like LDPC code design avoiding short trapping sets," {\it IEEE Int. Symp. Inf. theory (ISIT)}, {\it vol}.{\bf 6}, 
  (2015), 14--19.     

\bibitem{Turan}
 M. Aigner, M. G. Ziegler,  Proofs from The Book, {\it Berlin, New york:}, {\it Springer-Verlag} (1998).
 

\bibitem{Bondy}
 J. A. Bondy and U. S. R. Murty,  Graph theories and applications, {\it Department of combinatorics and optimisation}, {\it University of Waterloo} (1979). 
  
 \bibitem{2012}
M. Karimi, A. H. Banihashemi,  ``Efficient algorithm for finding  dominant trapping sets of LDCP codes," {\it IEEE Trans. Inf. Theory}, {\it vol}.{\bf 58}, {\it no}.{\bf 11}
(2012), 6942--6958.
 
 \bibitem{Garnick}
D. Garnick, Y. H. Harris Kwong, F. Lazebnick,  ``Extremal graphs without three-cycles or four-cycles," {\it arXiv} (2013).
\bibitem{Furedi}
Z. Furedi, M. Simonovits,  ``The history of degenerate (bipartite) extremal graph problems," {\it arXiv} (2013).      
\end{thebibliography}
\end{document}